\documentclass[11pt]{article}

\usepackage{times}
\usepackage{amssymb,epsfig,verbatim}
\bibliographystyle{plain}

\topmargin 0pt \advance \topmargin by -\headheight \advance
\topmargin by -\headsep \textheight 8.9in \oddsidemargin 0pt
\evensidemargin \oddsidemargin \textwidth 6.5in
\parskip 4pt
\parindent 0pt



\def\Z2{{\mathbb Z}_2}

\def\cl{\mbox{\it cl}\kern.2ex}
\def\C{{\mathcal C}\kern.1ex}
\def\,{\kern.15ex}


\def\beginpr{\begin{proof}}
\def\endpr{$\square$\end{proof}}

\newtheorem{thm}{Theorem}[section]
\newtheorem{lmma}[thm]{Lemma}
\newtheorem{corr}[thm]{Corollary}
\newtheorem{define}[thm]{Definition}
\newenvironment{theorem}
  {\begin{thm}\hskip-6pt{\bf .}\enspace \sl}{\end{thm}}

\newenvironment{proof}
  {\vskip5pt\hspace*{15pt}{\it Proof}.\hskip10pt}{\qed\vskip5pt}
\def\qed{\nobreak\hskip1pt~$\;\;\scriptstyle\Box$}

\begin{document}

\title{Constrained multilinear detection \\ for faster functional motif discovery}


\author{Ioannis Koutis \thanks{This work has been partially supported by NSF CAREER Award 1149048.} \\
Computer Science Department\\[-3pt]
University of Puerto Rico, Rio Piedras\\[-3pt]
ioannis.koutis@upr.edu}

\maketitle

%

\begin{abstract}
The \textsc{Graph Motif} problem asks whether a given multiset of colors appears on a connected subgraph of a vertex-colored graph. The fastest known parameterized algorithm for this problem is based on a reduction to the $k$-Multilinear Detection (\textsc{$k$-MlD}) problem: the detection of multilinear terms of total degree $k$ in polynomials presented as circuits.
We revisit \textsc{$k$-MlD}  and define \textsc{$k$-CMlD}, a constrained version  of it which reflects \textsc{Graph Motif} more faithfully. We then give a fast algorithm for \textsc{$k$-CMlD}. As a result we obtain faster parameterized algorithms for \textsc{Graph Motif} and variants of it.
\end{abstract} 
\section{Introduction}

The \textsc{Graph Motif} problem was introduced in \cite{LacroixFS06} in the context of metabolic network analysis. It asks whether a given vertex-colored graph contains a connected subgraph whose colors agree with a given multiset of colors, the `motif'. In this context motifs are often described as `functional', to distinguish from `topological' motifs, which are fixed subgraphs such as a $k$-path. Further applications of functional motif discovery have been discussed in \cite{BetzlerBFKN11}. Topological motif discover is also known to have many applications, in particular in protein networks \cite{ScottIKS06}.

Since its introduction, \textsc{Graph Motif} has received significant attention. It is known to be NP-hard even when the given graph is a tree of maximum degree 3 and the motif is a set~\cite{FellowsFHV11}. However in most practical cases the size of the targeted motif is relatively small, and so the parameterized version of the problem is arguably natural. The problem is fixed parameter tractable~\cite{FellowsFHV11} and the fastest known (randomized) parameterized algorithm~\cite{guillemot12finding} runs in $O^*(4^k)$ time \footnote{$O^*()$ hides factors polynomial in the input size.} and polynomial space, where $k$ is the size of the motif.

The algorithm of~\cite{guillemot12finding} is based on a reduction to the $k$-Multilinear Detection  problem (\textsc{$k$-MlD}) and a subsequent call of the fastest known algorithm for it \cite{Koutis08, WilliamsIPL09}. The \textsc{$k$-MlD} problem asks whether a polynomial presented as a circuit contains a multilinear term of degree $k$, when construed as a sum of monomials. Here, a circuit is a directed acyclic graph with addition and multiplication gates and terminals corresponding to variables.
 Notably, \textsc{$k$-MlD} gives also the fastest known algorithms for finding topological motifs such as $k$-trees \cite{KoutisW09}.

Motivated by \textsc{Graph Motif} we define the k - Constrained Multilinear Detection problem \textsc{$k$-CMlD}.

\textsc{$k$-CMlD}. On input of: \\
(i) an arithmetic circuit $C$ representing a polynomial $P(X)$ \\ (ii) a mapping $\chi:X\mapsto {\cal C}$ of variables $X$ to colors $\cal C$ \\ (iii) a mapping $\mu:{\cal C}\mapsto {\mathbb N}$ of colors to natural numbers (the multiplicities),

decide whether $P(X)$ construed as a sum of monomials contains an \textbf{allowed} multilinear monomial of degree $k$. An allowed multilinear monomial is a term that contains at most $\mu(c)$ appearances of variables colored with color~$c$.

The technique of~\cite{guillemot12finding} extends to the \textsc{$k$-CMlD} problem, giving an $O^*(4^k)$ time and polynomial space algorithm.  In this paper we show the following.

\begin{theorem} \label{th:main}
The \textsc{$k$-CMlD} problem can be solved by a randomized algorithm in $O^*(2.54^k)$ time and polynomial space.
\end{theorem}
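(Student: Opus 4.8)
I would in fact aim directly for $O^*(2^k)$ time and polynomial space, which gives the claimed bound a fortiori. After the standard $O(k^2)$-size homogenization of $C$ (carry, gate by gate, the homogeneous components of degrees $0,\dots,k$ — sums componentwise, products by convolution — and output the degree-$k$ component) we may assume $P$ is homogeneous of degree $k$; I also assume $\sum_c\mu(c)=k$, the case arising from \textsc{Graph Motif}, deferring the general case to the end. For each color $c$ take a block $B_c$ of $\mu(c)$ fresh ``coordinate'' variables $y_j$, so $[k]=\bigsqcup_c B_c$. Fix $\ell=O(\log k)$, draw $a_i\in\mathbb{F}_{2^{\ell}}$ and $r_{i,j}\in\mathbb{F}_{2^{\ell}}$ uniformly at random, and substitute each $x_i$ of color $c$ by $a_i\,\ell_i(y)$ with $\ell_i(y)=\sum_{j\in B_c}r_{i,j}y_j$ — a random linear form \emph{supported on the block of $x_i$'s color}. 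Let $Q(y)$ be the result; since $P$ is homogeneous of degree $k$ and the $\ell_i$ are linear, $Q$ is homogeneous of degree $k$ in $y_1,\dots,y_k$.

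By linearity over the monomials of $P$, the coefficient of $y_1\cdots y_k$ in $Q$ equals $\sum_m (\operatorname{coeff}_P m)\bigl(\prod_{x\in m}a_x\bigr)\,\mathrm{perm}(R_m)$, where $m=\prod_i x_i^{e_i}$ ranges over the degree-$k$ monomials of $P$, $R_m$ is the $k\times k$ matrix whose rows are indexed by the $k$ occurrences in $m$ and whose entry for an occurrence of $x_i$ in column $j$ is $r_{i,j}$ (taken $0$ if $j\notin B_{\chi(x_i)}$), and we use that the coefficient of $y_1\cdots y_k$ in a product of $k$ linear forms is the permanent of their coefficient matrix. Now $\mathrm{perm}(R_m)=0$ if $m$ repeats a variable (two identical rows, and a permanent with two equal rows vanishes in characteristic $2$, pairing permutations by that swap) and also if $m$ uses more than $\mu(c)$ variables of some color $c$ (then more than $\mu(c)$ rows are supported on the $\mu(c)$ columns $B_c$, so no permutation gives a nonzero product). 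The remaining $m$ are exactly the allowed multilinear monomials of degree $k$, and for these $\mathrm{perm}(R_m)=\prod_c\mathrm{perm}\bigl((r_{i,j})_{\chi(x_i)=c,\ j\in B_c}\bigr)$ is a product of permanents of square matrices of distinct indeterminates, hence a nonzero polynomial in the $r$'s; distinct $m$ contribute distinct monomials $\prod_{x\in m}a_x$, so there is no cross-cancellation. Thus $\operatorname{coeff}_{y_1\cdots y_k}(Q)$, as a polynomial in $(a,r)$, is nonzero if and only if $P$ contains an allowed multilinear monomial of degree $k$.

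It remains to detect this in polynomial space. Because $Q$ is homogeneous of degree $k$ in the $k$ variables $y$, the only degree-$k$ monomial touching all coordinates is $y_1\cdots y_k$, so $\operatorname{coeff}_{y_1\cdots y_k}(Q)=\sum_{S\subseteq[k]}(-1)^{k-|S|}Q(\mathbf 1_S)$, i.e. a sum of $2^k$ terms, each of which is a polynomial-time, polynomial-space evaluation of the homogenized circuit over $\mathbb{F}_{2^{\ell}}$ under $x_i\mapsto a_i\sum_{j\in B_{\chi(x_i)}\cap S}r_{i,j}$. By the DeMillo--Lipton--Schwartz lemma, with $\ell=O(\log k)$ this quantity is nonzero with probability $\ge\tfrac12$ over $(a,r)$ whenever an allowed multilinear degree-$k$ monomial exists, and $O(1)$ independent runs amplify. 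When $\sum_c\mu(c)>k$ one precomposes with a generic linear map $y_j\mapsto\sum_{l=1}^k\rho_{j,l}u_l$ ($\rho$ random, $u_1,\dots,u_k$ fresh): $Q$ becomes homogeneous of degree $k$ in the $k$ variables $u$, disallowed monomials still contribute $0$ — a repeated variable still gives two identical rows, and over-using a color $c$ forces (after expanding each entry over $B_c$) two equal rows in every resulting permanent, which again vanishes in characteristic $2$ — and the same $2^k$-term sieve recovers the coefficient of $u_1\cdots u_k$. In all cases the running time is $O^*(2^k)$ with polynomial space, so the theorem holds (indeed with base $2$).

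The step I expect to be genuinely delicate is making all the characteristic-$2$ cancellations coexist cleanly: the ``equal rows kill the permanent'' mechanism must be the \emph{only} way monomials of $P$ fail to be detected, so that (i) repeated variables, (ii) colors used beyond their multiplicity, and — in the slack case — (iii) the rank loss introduced by the generic substitution $y\mapsto u$ all reduce to genuine equal-row patterns, while every allowed multilinear monomial of degree $k$ keeps at least one surviving permanent term. Coupled with this is the bookkeeping ensuring that the $2^k$-term inclusion--exclusion sieve computes precisely the squarefree degree-$k$ part of $Q$ and nothing extraneous (this is why the homogenization and the linear — not, e.g., group-algebra — form of the substitution matter); everything else is the routine Schwartz--Zippel-style argument.
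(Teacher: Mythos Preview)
Your approach is genuinely different from the paper's and, once patched, actually yields a stronger bound. The paper works in the group algebra $\mathbb{Z}_2[\mathbb{Z}_2^k]$: it restricts each color $c$ to a random $\mu(c)$-dimensional subspace so that over-using a color forces linear dependence and hence annihilation, then pays $O^*((1/p_3)^{k/3})$ repetitions to boost the ``constrained survival'' probability, landing at $O^*(2.54^k)$. You instead push each variable to a random linear form supported on its color's coordinate block, reduce detection of the squarefree coefficient to a $2^k$-term inclusion--exclusion sieve over $\mathbb{F}_{2^\ell}$, and use that in characteristic~2 the permanent equals the determinant, so repeated or over-colored rows kill it identically. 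This avoids the survival-probability loss entirely and gives $O^*(2^k)$ rather than $O^*(2.54^k)$.

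There is, however, a real gap. Your per-variable multipliers $a_i$ separate \emph{different} allowed multilinear monomials $m$ from one another, but they do nothing about the coefficient $\operatorname{coeff}_P(m)$ itself: if that integer is even, the term $(\operatorname{coeff}_P m)\bigl(\prod_x a_x\bigr)\mathrm{perm}(R_m)$ vanishes over $\mathbb{F}_{2^\ell}$, and your algorithm outputs ``no'' even when an allowed multilinear monomial is present. The paper handles exactly this issue by first passing to Williams' extended circuit $\tilde C$, labeling every edge incoming to an addition gate with a fresh multiplier from a set $A$ so that every monomial of $P(X,A)$ has coefficient~$1$ (the isolation property). You need the same preprocessing; once you insert it, your argument goes through, since each parse of each allowed $m$ now carries a distinct $A$-monomial and Schwartz--Zippel applies to the polynomial in $(A,a,r)$.

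Two smaller points. In the slack case $\sum_c\mu(c)>k$, over-using a color does not literally produce ``two equal rows'' after the generic substitution $y\mapsto u$; it produces \emph{linearly dependent} rows in the $k\times k$ matrix $R_m\rho$. That is still enough, because in characteristic~2 the permanent coincides with the determinant and linear dependence of rows kills the determinant --- but your stated reason is not the right one. And for the allowed monomials in that case you should say explicitly why $\det(R_m\rho)$ is a nonzero polynomial in $(r,\rho)$: $R_m$ has full row rank $k$ as a polynomial matrix in $r$ (rows of different colors are supported on disjoint blocks, and within a color there are at most $\mu(c)$ rows on $\mu(c)$ independent random columns), so $R_m\rho$ is generically nonsingular.
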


We stress that this upper bound reflects a worst-case scenario which occurs only when the multiplicities $\mu(c)$ are all equal to 3. The running time can be as low as $O^*(2^k)$ when only $O(\log n)$ color multiplicities are different than~$1$.

We present the algorithm and prove its properties in Section \ref{sec:proof}. The algorithm is a simple modification of the \textsc{$k$-MlD} algorithm in \cite{WilliamsIPL09}. As a corollary, {\em any} problem that is reducible to \textsc{$k$-MlD} can now be solved with additional color constraints in  $O^*(2.54^k)$ time. This allows the combination of topological and functional constraints in motifs, for example by discovering colored $k$-paths. In Section~\ref{sec:reduction} we comment on applications of \textsc{$k$-CMlD} to \textsc{Graph Motif} and variants of it, such as the \textsc{Multiset Motif}, \textsc{Min-Add}, \textsc{Min-CC} and \textsc{Min-Substitute}. We obtain a faster parameterized algorithm for each of these problems.  
\section{The algorithm and its proof} \label{sec:proof}

\begin{proof}   We start by reviewing at a high level the purely algebraic algorithm for \textsc{$k$-MLD} in \cite{Koutis08, WilliamsIPL09}. The algorithm is based on properties of the group algebra $\Z2[\Z2^k]$ \cite{Koutis08}. The group multiplication over the group of $k$-dimensional 0-1 vectors is entry-wise addition modulo $2$. Consider any $k$-dimensional zero vector $v$, and let $v_0$ denote the identity of the group, that is the $0$ vector. In the group algebra $\Z2[\Z2^k]$, we have
$$
     (v_0+v)^2 = v_0^2 + 2v_0v +v^2 = 2v_0 +2v = 0~mod~2.
$$
To understand this identity, note the use of commutativity and of the facts $v_0^2=v^2=v_0$, $vv_0=v$.
This is the \textbf{annihilation property} which --by commutativity-- forces all non-multilinear terms of $P(X)$ to go to $0~mod~2$ if $C$ is evaluated on vectors of the form $v_0+v$. The second important property of this assignment is the \textbf{survival property} which is the fact that products of the form
$$
    (v_0 +v_1)(v_0 +v_2)\cdots(v_0 +v_k)
$$
evaluate to non-zero if and only if the vectors $\{v_1,\ldots,v_k\}$ are linearly independent over $\Z2$. In~\cite{Koutis08} it is also shown that $k$ random 0-1 vectors are linearly independent with probability at least $1/4$. So by assigning to each variable a random value of the form $v_0+v$ makes each multilinear monomial of $P(X)$ evaluate to non-zero with probability at least $1/4$.

The second part of the algorithm \cite{WilliamsIPL09} treats the case where a multilinear monomial has an even number of copies in $P(X)$. The algorithm first constructs an `extended' version $\tilde{{C}}$ of the input circuit $C$ which represents the same polynomial $P(X)$. It then labels each edge incoming to an addition gate in $\tilde{{C}}$ with a different multiplier from a set of additional variables $A$. By the construction of $\tilde{{C}}$ and the positions of the multipliers, $P(X,A)$ has the \textbf{isolation property}: the coefficient of each multilinear monomial in the polynomial $P(X,A)$ is $1$. The algorithm then assigns random values from the field ${\mathbb F}=GF(2^{3+\log_2 k})$ to the variables in $A$, values of the form $(v_0+v_i)$ to the variables in $X$ as described above, and evaluates $\tilde{{C}}$ over ${\mathbb F}[\Z2^k]$. Essentially because each copy of a given multilinear monomial now gets a random multiplier from $\mathbb F$, there is a good probability that their sum is not $0$~mod~2; formally this follows from the Schwartz-Zippel Lemma.  The output is `yes' if and only if the output of $\tilde{{C}}$ is non-zero, an event which happens with constant probability.

We now modify the algorithm in order to solve \textsc{$k$-CMlD}. The only change is in the assignment $X\mapsto \Z2[\Z2^k]$ which will now force not only non-multilinear but also `undesired' multilinear monomials to evaluate to $0~mod~2$.

\noindent (a) For each $c\in C$ pick $\mu(c)$ random `basis' vectors from $\Z2^k$. Let $S_c$ denote the subspace over $\Z2$ formed by these basis vectors. Repeat if the dimension $dim(S_c)$ of the subspace $S_c$ is less than $\mu(c)$. 

\noindent (b) For each $x\in X$ pick a random vector $v\neq v_0$
from $S_c$, where $c=\chi(c)$. Assign to $x$ the element $v_0+v$.

These two steps are easily implementable in $O(k n)$ time and space.  The claim for (a) is a corollary of the results in \cite{Koutis08}. Picking a random vector in (b) can be done by picking a random subset of vectors of the $S_c$ basis and adding them up.

It can be seen that with the proposed assignment $X\mapsto \Z2[\Z2^k]$  non-multilinear monomials evaluate to $0$ by the annihilation property. For the undesired multilinear monomials we will use the survival property but in its negative form: the product
$$
    (v_0 +v_1)(v_0 +v_2)\cdots(v_0 +v_t)
$$
is equal to $0~mod~2$ if the vectors $\{v_1,\ldots,v_t\}$ are linearly dependent over $\Z2$. A multilinear term which contains more than $\mu(c)$ variables colored with $c$ will evaluate to $0$ simply because the corresponding vectors are linearly dependent, as their number exceeds the dimension $\mu(c)$ of the subspace $S_c$ they belong to. In summary, we get a \textbf{constrained annihilation property}.

It remains to show that the assignment has a \textbf{constrained survival property} too. Consider one of the `allowed' multilinear terms. In general the term will consist of same-colored groups of variables $g_1,\ldots,g_m$, where group $g_i$ is colored with $c_i$ and contains $t_i\leq \mu(c_i)$ (distinct) variables. A necessary condition for the multilinear term to evaluate to non-zero is a  \textbf{joint survival property}: for all $i$, the product of the variables in group $g_i$ must evaluate to non-zero.

Using the analysis in \cite{Koutis08}\footnote{The random assignment in \cite{Koutis08} doesn't reject the $v_0$ vector. This makes the calculation in \cite{Koutis08} slightly different; the denominator is $2^t$ instead of $2^{t}-1$. Here it makes sense rejecting $v_0$ to improve the survival probabilty.}, the subspace survival probability for a group of size $t=\mu(c)\geq 2$ is precisely equal to
$$
    p_{t} = \prod_{i=1}^{t-1} \frac{2^t - 2^{t-i}}{2^{t}-1}.
$$
We also trivially have $p_1=1$. The product $\prod_{i=1}^m p_{t_i}$ is equal to the joint survival probability by independence of the events. We claim that, within a constant, the grouping of minimum joint survival probability is the
one that consists of $k/3$ groups of size $t=3$ each. To see why, note that for all $t$ we have $p_t>1/4$ and $p_2^4<1/4$. In addition we have $p_7>p_5p_2$, $p_6>p_3^2$, $p_5>p_3p_2$. This implies the minimum probability grouping cannot contain groups of size larger than 4, because they can be replaced with smaller groups to get a grouping with a lower probability. Similarly, $p_4^3>p_3^4$ implies that a minimum probability term cannot contain more than $2$ groups of size $4$. On the other hand $p_3^2<p_2^3$ and $p_2<p_1^2$  imply that the minimum probability grouping can't contain more than $2$ groups of size $2$ or $1$. Hence a term with minimum joint survival probability must consist of groups of size $3$ and at most $2$ groups of size $4,2$ and $1$.

It follows that the joint survival probability of any allowed multilinear term is $\Theta(p_3^{k/3})$.  Conditioned on the event of joint survival, the probability that the multilinear term evaluates to non-zero is at least $1/4$ by the usual survival property. This is because the subspaces are constructed independently at random; indeed the proof in \cite{Koutis08} looks at the extreme case when all groups are singletons.

Overall, if $P(X)$ contains an allowed term, evaluating $\tilde{C}$ on the proposed assignment returns a non-zero value with probability at least $p_3^{k/3}$. To boost this probability to a constant value we need to repeat $O((1/p_3)^{k/3})$ times. As in \cite{Koutis08,WilliamsIPL09}, each evaluation can be done in polynomial space and $O^*(2^k)$ time. So the overall running time is at most $O^*((8/p_3)^{k/3})$. After a simple calculation using the precise value for $p_3$, the running time turns out being around $O^*(2.54^{k})$. This finishes the proof.
\end{proof}

\textbf{Even Faster:} As it is clear from the proof, the number and sizes of the groups in an allowed multilinear term affects its probability of survival. The algorithm is in general faster when fewer color groups occur in the allowed term.   For example when the multilinear term contains  $k/10$ color groups the algorithm runs in $O^*(2.26^k)$ time.  The algorithm can be as fast as $O^*(2^k)$ when there are $O(\log n)$ non-singleton groups (the probability of subspace survival is inversely polynomial).

\section{\textsc{Graph Motif} and Variants} \label{sec:reduction}

We now present the consequences of our main result for the \textsc{Graph Motif} problem and some variants introduced more recently to account for noise in biological data \cite{DondiFV11b,DondiFV11}. In most cases the reductions to \textsc{$k$-CMlD} are identical to the reductions of the \textsc{Colorful Motif} problem to \textsc{$k$-MlD} given in \cite{guillemot12finding}; this is the case when every color appears exactly once in the motif. For this reason we omit most of the details. We note here that for each problem we state the result for its decision version. In every case the search problem can be solved via an easy reduction to the decision problem.

\begin{theorem} \label{th:motif}
 There is a randomized algorithm for \textsc{Graph Motif} that runs in time $O^*(2.54^k)$ and polynomial space.
\end{theorem}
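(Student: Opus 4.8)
The plan is to reduce \textsc{Graph Motif} to \textsc{$k$-CMlD} and then invoke Theorem~\ref{th:main}. Take an instance of \textsc{Graph Motif} given by a graph $G=(V,E)$, a vertex coloring $\mathrm{col}\colon V\to{\cal C}$, and a motif that is a multiset of colors of total size $k$; after deleting the vertices whose color is absent from the motif we may assume every $c\in{\cal C}$ occurs in it. I would introduce one variable $x_v$ per vertex $v\in V$ and build, exactly as in the reduction of \textsc{Colorful Motif} to \textsc{$k$-MlD} in~\cite{guillemot12finding}, an arithmetic circuit computing the polynomials $P_{v,i}$, for $v\in V$ and $1\le i\le k$, via the branching-walk recurrence
$$
P_{v,1}=x_v, \qquad\qquad P_{v,i}=\sum_{u\sim v}\ \sum_{j=1}^{i-1} P_{v,j}\cdot P_{u,i-j},
$$
where $u\sim v$ means $uv\in E$, and finally outputting $P(X)=\sum_{v\in V}P_{v,k}$. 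This circuit has $O(n^2k^2)$ gates, where $n=|V|$. I would then set $\chi(x_v)=\mathrm{col}(v)$ and let $\mu(c)$ be the number of occurrences of color $c$ in the motif, so that $\sum_{c\in{\cal C}}\mu(c)=k$.

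The heart of the argument is the equivalence that $P(X)$ contains an allowed multilinear monomial of degree $k$ if and only if $G$ contains a connected subgraph realizing the motif. By a straightforward induction along the recurrence, every monomial of $P_{v,i}$ is supported on a vertex set that induces a connected subgraph containing $v$ and has total degree $i$; hence a multilinear degree-$k$ monomial of $P(X)$ equals $\prod_{v\in S}x_v$ for some set $S$ of $k$ distinct vertices with $G[S]$ connected, and being \emph{allowed} means that $S$ contains at most $\mu(c)$ vertices of each color $c$ --- which, combined with $|S|=k=\sum_{c}\mu(c)$, forces exactly $\mu(c)$ vertices of each color, i.e.\ $S$ realizes the motif. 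For the converse, given such an $S$, I would fix a spanning tree $T$ of $G[S]$, root it arbitrarily, and peel off $T$ one child-subtree at a time: this exhibits a derivation in the recurrence producing the monomial $\prod_{v\in S}x_v$, so this allowed multilinear degree-$k$ monomial indeed occurs in $P(X)$. (Since the circuit uses only $+$ and $\times$ there is no cancellation at the level of formal polynomials, and the \textbf{isolation property} built into the \textsc{$k$-CMlD} algorithm of Theorem~\ref{th:main} handles cancellations that could occur during evaluation over characteristic~$2$.)

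With the equivalence in hand, running the algorithm of Theorem~\ref{th:main} on the circuit together with $\chi$ and $\mu$ decides \textsc{Graph Motif} in $O^*(2.54^k)$ time and polynomial space, because the circuit has polynomial size; the search version follows by the standard self-reduction that repeatedly tests whether deleting a vertex keeps the answer positive. I do not expect a genuine obstacle: this is precisely the reduction of~\cite{guillemot12finding} for \textsc{Colorful Motif}, the only difference being that there all $\mu(c)=1$ whereas here $\mu$ is an arbitrary multiplicity vector, which leaves the construction of the circuit unchanged. The one point that deserves care is the verification, sketched above, that the multilinear degree-$k$ monomials of the circuit correspond (up to the choice of a rooted spanning tree) exactly to the connected $k$-vertex subgraphs of $G$.
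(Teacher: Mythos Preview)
Your proposal is correct and follows essentially the same approach as the paper: reduce \textsc{Graph Motif} to \textsc{$k$-CMlD} via the circuit of~\cite{guillemot12finding} for \textsc{Colorful Motif}, set $\chi$ and $\mu$ from the vertex coloring and the motif, and apply Theorem~\ref{th:main}. The paper's proof is terser (it simply states the two properties of the circuit and cites~\cite{guillemot12finding}), whereas you spell out the branching-walk recurrence and the correspondence between allowed multilinear monomials and motif occurrences, including the observation that ``at most $\mu(c)$'' together with $|S|=\sum_c\mu(c)$ forces ``exactly $\mu(c)$''.
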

\begin{proof}
We assign to each vertex $v$ of the input graph $G$ a variable $x_v \in X$. For a (connected) subtree $T=(V_T,E_T)$ of $G$ we let $Q_{T}= \prod_{v \in V_T} x_v$. It is straightforward (as shown  in \cite{guillemot12finding} in the algorithm for the \textsc{Colorful Motif} problem) to construct a circuit $C$ of size $O(k|G|)$ with the following properties: (i)~ Each  subtree of size $k$ contributes at least one copy of $Q(T)$ in the polynomial $P(X)$ represented by $C$.  (ii) Each multilinear monomial of degree $k$ in $P(X)$ is equal to $Q(T)$ for some size-$k$ subtree $T$ of $G$. We further assign the colors of the nodes to the corresponding variables and set the multiplicities $\mu(c)$ according to the motif. This gives an instance of  \textsc{$k$-CMlD}.
\end{proof}

In the \textsc{Graph Motif} problem we have that $\sum_{c\in C} \mu(c) = k$. The reduction to \textsc{$k$-CMlD} also solves a relaxed version of \textsc{Graph Motif} where the equality doesn't have to hold; this is known as \textsc{Max Graph Motif} \cite{DondiFV11b} or \textsc{Multiset Graph Motif} \cite{guillemot12finding}. As shown in~\cite{guillemot12finding} the reduction to \textsc{$k$-MlD} can be used to solve a relaxation dubbed \textsc{Graph Motif with Gaps} in \cite{guillemot12finding}; an alternative but equivalent formulation is called \textsc{Min Add} in~\cite{DondiFV11}. Informally this problem asks for a connected subgraph of size $r>k$ that contains a subset of the motif, for some specified $r$.  The reduction of \cite{guillemot12finding} extends immediately to  \textsc{$k$-CMlD}, improving the running time from $O^*(4^k)$ to $O^*(2.54^k)$.  The same improvement applies to the \textsc{Min-CC} problem, which asks for a minimum number of connected subgraphs that cover the motif, again via the reduction given in~\cite{guillemot12finding}.

We finally consider the \textsc{Min-Substitute} problem~\cite{DondiFV11}. Informally this problem asks for a connected subgraph with color multiplicities that are allowed to deviate from the motif, but as little as possible. Viewed alternatively, the subgraph meets the specifications in the motif but without accounting for $p$ of its nodes, where $p$ is as small as possible.  The fastest known algorithm runs in time  $O^*((3e)^k)$~\cite{DondiFV11}. We give an improved algorithm, which is not a direct reduction to \textsc{$k$-CMlD} but uses elements from it.

\begin{theorem}
 There is a randomized algorithm for \textsc{Min-Substitute} that runs in $O^*(5.08^k)$ time and polynomial space.
\end{theorem}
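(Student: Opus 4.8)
The plan is to solve \textsc{Min-Substitute} by running a detection subroutine for each candidate value of the substitution budget $p\in\{0,1,\dots,k\}$ and returning the smallest $p$ for which it reports success. Fix $p$. Exactly as in the proof of Theorem~\ref{th:motif}, I would build in $O(k|G|)$ time a circuit $C$ over variables $X=\{x_v:v\in V(G)\}$ whose polynomial $P(X)$ satisfies: every connected size-$k$ subtree $T$ of $G$ contributes a copy of $Q_T=\prod_{v\in T}x_v$, and every degree-$k$ multilinear monomial of $P(X)$ equals $Q_T$ for some such $T$. Now observe that $p$ substitutions suffice for a connected size-$k$ subtree $T$ precisely when $V_T$ can be partitioned as $V_T=U\cup W$ with $|W|\le p$ and the color multiset of $U$ contained in the motif (the $p$ vertices of $W$ being the ones whose colors are reassigned). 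So the subroutine must detect a degree-$k$ multilinear monomial of $P(X)$ admitting such a partition.

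I would leave $C$ untouched and instead randomize the assignment $X\mapsto\Z2[\Z2^k]$. For each motif color $c$, draw a random subspace $S_c\subseteq\Z2^k$ of dimension $\mu(c)$, and draw one further random subspace $S_\star\subseteq\Z2^k$ of dimension $p$, each exactly as in step~(a) of the proof of Theorem~\ref{th:main}. Then, independently for every vertex $v$, flip a fair coin: with probability $1/2$ put $x_v\mapsto v_0+z$ for a uniformly random $z\ne v_0$ in $S_{\chi(v)}$, and with probability $1/2$ put $x_v\mapsto v_0+z$ for a uniformly random $z\ne v_0$ in $S_\star$. Finally evaluate the isolating extension $\tilde C$ (with random $\F$-multipliers on the $A$-edges, $\F=GF(2^{3+\log_2 k})$) over $\F[\Z2^k]$ exactly as in the proof of Theorem~\ref{th:main}, and report success iff the output is nonzero.

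Correctness is via the constrained annihilation and survival properties. Since every $x_v$ receives a value of the form $v_0+z$, all non-multilinear monomials vanish. A monomial $Q_T$ with $|T|=k$ can survive only if the coins partition $V_T$ into a set of ``matched'' vertices that is color-feasible (at most $\mu(c)$ of color $c$, else their vectors lie in $S_c$ and are dependent) and a set of at most $p$ ``substitute'' vertices (else their vectors lie in the $p$-dimensional $S_\star$ and are dependent); whenever $Q_T$ survives, this very partition certifies that $p$ substitutions suffice for the connected subtree $T$, so a nonzero output yields a valid solution for budget $p$. Conversely, if some size-$k$ subtree $T$ admits a valid partition $(U,W)$, then with probability $2^{-k}$ the coins send $U$ to the matched role and $W$ to the substitute role; conditioned on that, by the joint-survival analysis in the proof of Theorem~\ref{th:main} the color-groups inside $U$ jointly survive with probability at least $c\,p_3^{|U|/3}$, the single substitute group (size $\le p$ in ambient dimension $p$) survives with probability $>1/4$ by the bound $p_t>1/4$ used there, and, the subspaces being drawn independently, all these events combine to a nonzero value with one more factor of $1/4$. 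Hence the subroutine reports a valid $T$ with probability $\Omega(2^{-k}p_3^{k/3})$, the survival factor being worst (up to constants) when $p=0$ and every motif multiplicity equals $3$, since a larger substitute group can only help: $p_t>1/4$ is essentially constant while $p_3^{|U|/3}$ only grows as $|U|$ shrinks.

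Boosting the success probability to a constant needs $O\big(2^k(1/p_3)^{k/3}\big)$ independent repetitions, each an evaluation of $\tilde C$ over $\F[\Z2^k]$ costing $O^*(2^k)$ time and polynomial space; summing over the $k+1$ choices of $p$ gives total time $O^*\big(2^k\cdot(1/p_3)^{k/3}\cdot 2^k\big)=O^*\big((4\,(1/p_3)^{1/3})^k\big)$, which with $p_3=24/49$ is $O^*(5.08^k)$. The one genuine obstacle is the encoding: one cannot simply split $x_v$ into a colored copy and a ``joker'' copy and add them, because the sum of two elements of the form $v_0+z$ is no longer of that form and annihilation breaks; the per-vertex coin flip performs the ``guess which vertices are substituted'' step within the algebraic framework, at the price of the extra $2^k$ factor, after which the only thing left to verify is that the joker subspace $S_\star$ does not worsen the survival exponent.
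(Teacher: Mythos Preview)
Your proposal is correct and reaches the same $O^*(5.08^k)$ bound, but by a genuinely different mechanism than the paper's proof. The paper does \emph{not} iterate over~$p$ or flip per-vertex coins. Instead it does precisely what you dismiss in your last paragraph: it replaces each terminal $x_i$ by $a_{1,i}x_i+a_{2,i}y_i$ (a colored copy plus a joker copy) and handles the resulting loss of the $(v_0+z)$-shape by \emph{doubling} the group to $\Z2^{2k}$. Both $x_i$ and $y_i$ carry a common factor $(v_0+w_{x_i})$ living in the lower $k$ coordinates (this restores annihilation, since $x_i^2,y_i^2,x_iy_i$ all contain $(v_0+w_{x_i})^2=0$), while only $x_i$ carries the constrained factor $(v_0+\tilde v_{x_i})$ in the upper $k$ coordinates; $y_i$ instead carries a symbolic indeterminate~$z$, so that after evaluation over $\F[z][\Z2^{2k}]$ the coefficient of $z^p$ witnesses a solution with exactly $p$ substitutions, for all $p$ simultaneously.

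The arithmetic works out identically: the paper pays $O^*(4^k)$ per evaluation (because of the $2k$-dimensional group algebra) and needs $O((1/p_3)^{k/3})$ repetitions; you pay $O^*(2^k)$ per evaluation but need $O(2^k(1/p_3)^{k/3})$ repetitions to hit the correct coin pattern, times $k{+}1$ values of~$p$. Your route is conceptually lighter---it is literally the \textsc{$k$-CMlD} algorithm with one extra ``joker'' subspace $S_\star$ and a preceding coin toss---and it avoids the univariate symbolic layer. The paper's route buys a single pass that reads off the optimal $p$ directly and sidesteps the $2^{-k}$ guessing loss at the cost of a more intricate encoding. Your soundness argument (survival of $Q_T$ forces $|W|\le\dim S_\star=p$ and $\le\mu(c)$ matched vertices per color, hence no false positives for $p<p^*$) and your survival bound (treat $S_\star$ as one more independently drawn color subspace, then invoke the joint-survival and conditional $1/4$ bounds from Theorem~\ref{th:main}) are both valid.
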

\begin{proof}
Let $P(X,A)$ be the `extended' polynomial from the proof of Theorem \ref{th:motif}; recall that it encodes the instance of the \textsc{Max Motif} for motifs of size $k$. We will introduce an extra set of variables $Y$ that are in 1-1 correspondence with the variables of $X$. We introduce them in the circuit by replacing each terminal $x_i$ with the gate $a_{1,i} x_i+ a_{2,i} y_i$, where $a_{1,i},a_{2,i}$ are fresh variables from the special set $A$. Note then that each multilinear term $x_{i_1}\cdots x_{i_k}$ generates $2^k$ multilinear terms (now in the variables $(X,Y)$). However the use of multipliers $a_{1,i},a_{2,i}$ ensures the isolation property, exactly as described in the proof of Theorem \ref{th:main}. We will assign to the variables in $A$ random values from ${\mathbb F}=GF(2^{\log k +5})$. The argument of \cite{WilliamsIPL09} applies identically.

Having fixed the assignment to $A$, we evaluate $P(A,X,Y)$ on a slightly more involved assignment over $\Z2[\Z2^{2k}]$. Let $(v_0+v_{x_i})$ be the assignment to the variable $x_i$ in the proof of Theorem \ref{th:main}; recall that this is a $k$-dimensional vector. Each variable $x_i$ gets assigned an element of the form
$$
     (v_0+\tilde{v}_x)(v_0+w_{x_i})
$$
and the corresponding variable $y_i$ gets assigned $z(v_0+w_{x_i})$, where:

\noindent (i) $\tilde{v}_{x_i}$ is the vector $v_{x_i}$ padded with zeros in the lower $k$ coordinates.
\\ (ii)  $w_{x_i}$ is a vector with zeros in the upper $k$ coordinates and a random 0-1 vector in the lower $k$ coordinates.
\\\  (iii) $z$ is a free indeterminate that will remain unevaluated and will be handled symbolically.

We will evaluate the extended circuit over ${\mathbb F}[z][\Z2^{2k}]$. Here the coefficients of the vectors in $\Z2^k$ are univariate polynomials from ${\mathbb F}[z]$.
Note that every non-multilinear term in $P(X)$ gives $P(A,X,Y)$ terms that are multiples of $x_i^2$ or $y_i^2$ or $x_i y_i$. When evaluated, these three monomials are multiples of $(v_0+w_{x_i})^2=0$. So, roughly speaking, the lower $k$ coordinates always enforce the annihilation property. Also note that the factors that use vectors non-zero in the upper $k$ coordinates enforce the constrained annihilation property, by construction. Assume however that a multilinear term $x_{i_1}\cdots x_{i_k}$ doesn't meet the multiplicities of the motif, unless it can avoid accounting for the multiplicities contributed by  $p$ of its variables/vertices (wlog the first $p$). Then the `upper coordinates' for the multilinear term $y_{i_1}\cdots y_{i_p} x_{i_{p+1}}\cdots x_{i_k}$ evaluate to a non-zero value multiplied by $z^p$, with probability at least $O((p_3)^{k/3})$, as in the proof of Theorem \ref{th:main}. This is because the $p$ variables are essentially dropped from the $k$ upper coordinates that enforce constrained annihilation. Independently from that, the lower coordinates evaluate to non-zero with probability at least $1/4$, by the survival property. Hence the coefficient of $z^p$ is non-zero (with some probability) if and only if there is a connected subgraph that exceeds by $p$ the motif multiplicities. So, finding the smallest such $p$ solves the problem. The evaluation of the circuit can be done in $O^*(4^k)$ time and polynomial space, and we need $O((1/p_3)^{k/3})$ evaluations. This finishes the proof.
\end{proof}

\end{document}